\documentclass[11pt]{article}
\usepackage{fullpage}
\usepackage{amsmath,amsthm,amssymb,braket,color}
\usepackage{hyperref}
\hypersetup{colorlinks=true, citecolor=blue, linkcolor=red, urlcolor=blue}
\usepackage[capitalise]{cleveref}
\usepackage{tikz}
\usetikzlibrary{quantikz2}
\tikzset{
    mygroup/.style={rounded corners, draw=gray!30, fill=orange!10, inner xsep=1pt},
    mybox1/.style={draw=gray!30, fill=gray!30, inner xsep=1pt},
    mybox2/.style={draw=gray!15, fill=gray!15},
    mylabel1/.style={label position=above, anchor=north, yshift=10pt},
    mylabel2/.style={label position=above, anchor=north, yshift=4pt}
    }

\usepackage{subcaption}

\newtheorem{theorem}{Theorem}

\newtheorem{fact}[theorem]{Fact}
\newtheorem{corollary}[theorem]{Corollary}
\newtheorem{lemma}[theorem]{Lemma}
\newtheorem{conjecture}[theorem]{Conjecture}
\newtheorem{definition}[theorem]{Definition}

\title{Quantum circuit for multi-qubit Toffoli gate with optimal resource}
\author{
    Junhong Nie
    \thanks{State Key Lab of Processors, Institute of Computing Technology, Chinese Academy of Sciences, Beijing 100190, China}
    \thanks{School of Computer Science and Technology, University of Chinese Academy of Sciences, Beijing 100049, China}
    \\ \texttt{niejunhong19z@ict.ac.cn}
    \and
    Wei Zi
    \footnotemark[1]
    \footnotemark[2]
    \\ \texttt{ziwei20z@ict.ac.cn}
    \and
    Xiaoming Sun
    \footnotemark[1]
    \footnotemark[2]
    \\ \texttt{sunxiaoming@ict.ac.cn}
}

\begin{document}

\maketitle

\begin{abstract}
    Resource consumption is an important issue in quantum information processing, particularly during the present NISQ era. In this paper, we investigate resource optimization of implementing multiple controlled operations, which are fundamental 
    building blocks in the field of quantum computing and quantum simulation. We design new quantum circuits for the $n$-Toffoli gate and general multi-controlled unitary, which have only $O(\log n)$-depth and $O(n)$-size, and only require $1$ ancillary qubit. To achieve these results, we 
    explore the potential of ancillary qubits and discover a method to create new conditional clean qubits from existed ancillary qubits. These techniques can also be utilized to construct an efficient quantum circuit for incrementor, leading to an implementation of multi-qubit Toffoli gate with a depth of $O(\log^2n)$ and size of $O(n)$ without any ancillary qubits. Furthermore, we explore the power of ancillary qubits from the perspective of 
    resource theory. We demonstrate that without the assistance of ancillary qubit, any quantum circuit implementation of multi-qubit Toffoli gate must employ exponential precision gates. This finding indicates a significant disparity in computational power of quantum circuits between using and not using ancillary qubits. Additionally, we discuss the comparison of the power of ancillary qubits and extra energy levels in quantum circuit design.
\end{abstract}

\section{Introduction}
Quantum circuit is one of the most important model in quantum computing. Unitary operations are usually implemented by quantum circuits which consists of CNOT and single-qubit gates. Optimization of circuit depth, size and ancilla count plays a crucial role in the era of Noisy Intermediate-Scale Quantum (NISQ) devices \cite{preskill2018quantum}. This paper introduces a novel quantum circuit implementation of $n$-Toffoli gate, which applies $X$ gate to the target qubit exclusively when all other qubits are at state $\ket{1}$. The application scope of $n$-Toffoli gate extends to diverse fields, including quantum error correction \cite{cory1998experimental,dennis2001toward,PhysRevLett.77.793}, quantum machine learning \cite{rebentrost2014quantum, tacchino2019artificial}, quantum singular value transformation \cite{gilyen2019quantum,PRXQuantum.2.040203}, quantum simulation \cite{babbush2018encoding}, and the implementation of quantum algorithms such as state preparation \cite{PhysRevLett.129.230504}, unitary synthesis \cite{sun2023asymptotically,PhysRevLett.92.177902} and Grover's algorithm \cite{grover1996fast}. The $n$-Toffoli gate demonstrates its usefulness and significance across a wide range of domains, showcasing its versatility and potential impact. 

To achieve systematic approach of quantum computing, unitary are often implemented under elementary universal gate sets, such as CNOT and single-qubit gates. The implementation of the $n$-Toffoli gate is widely concerned \cite{PhysRevLett.129.230501,PhysRevLett.86.3907,PhysRevLett.119.160503,PhysRevX.10.021054}. It is worth noting that regardless of the number of ancillary qubits, any implementation of $n$-Toffoli gate must have a depth of $\Omega(\log n)$ and a size of $\Omega(n)$ \cite{fang2006quantum}. By utilizing qutrits, $n$-Toffoli gate can be implemented with depth $O(\log n)$ without the need for ancillary qutrits \cite{ISCAqutrit}. In the case of qubit system, the current best design aiming to eliminate the reliance on ancillary qubits results in linear depth \cite{da2022linear}. On the other hand, the best design that has circuit depth $O(\log n)$ requires $\Omega(n)$ ancillary qubits \cite{he2017decompositions}. Researchers are actively exploring the existence of a quantum circuit implementing $n$-Toffoli gate with asymptotically optimal circuit and space complexity, while minimizing the use of ancillary qubits. This pursuit has generated substantial interest within the field.

In this paper, we consider exact implementation of $n$-Toffoli gate and $n$-controlled $U$ gate using quantum circuit that only consists of single qubit gates and CNOT gate. Our result is that both these two gates can be implemented in $O(\log n)$ depth and $O(n)$ size, with the help of only $1$ ancillary qubit. For $n$-Toffoli and $n$-controlled $U$ gate such that $U^2=I$, the ancillary qubit is enough to be dirty. For $n$-controlled $U$ gate with $U^2\ne I$, the ancillary qubit is required to be clean. To achieve this result, we figure out a novel way of utilizing existed clean ancillary qubits to create new conditionally clean ancillary qubits. What we mean by conditionally clean is that the qubit is at state $\ket{0}$ conditioned on that some existed clean ancillary qubits are at a particular state. This is shown in \cref{sec:construction}.

We believe that this trick may help in designing other quantum circuits too. To raise evidence, we use the technique to design a quantum circuit for incrementor in depth $O(\log^2n)$ and size $O(n)$, using only $1$ clean ancillary qubit. Recent years, designing quantum arithmetic circuits has become a welcomed topic owing to its broad applications in quantum computing, e.g. Shor's algorithm \cite{shor1994algorithms}. A quantum incrementor computes $\ket{x}\to\ket{(x+1)\mod{2^n}}$ for all $x$. To the best of our knowledge, our approach gives the first poly-logarithmic implementation of this operation with $o(n)$ ancillary qubits. As a by-product, this implementation also leads to a quantum circuit for $n$-Toffoli gate with no ancillary qubit that has depth $O(\log^2n)$ and size $O(n)$. Our construction of incrementor circuit is detailed in \cref{subsec:plus_one}, and why it leads to $n$-Toffoli construction is explained in \cref{subsec:toffoli_0_anc}.

Furthermore, we exhibit two relevant discussions. In \cref{sec:precision}, we try initiating some study about the relationship between number of ancillary qubits and gate precision in quantum operation implementations. In his widely known blog \cite{gidney2015}, Gidney shows how to implement $n$-Toffoli gate in size $O(n)$ without any ancillary qubit. While our implementation in \cref{sec:construction} consists of only up to $4$-Toffoli gates, his construction utilizes ``high precision'' gates such as $Z^{1/2^n}$ which is both hard to be realized precisely on NISQ devices and costly to be performed in current fault-tolerant schemes. Gidney further asks that under the constraint of no ancillary qubit, whether $n$-Toffoli gate can be implemented without using these high precision gates. However, it turns out to be impossible. In \cref{sec:precision}, we prove that when restricted to zero ancillary qubits, both $n$-Toffoli gate and incrementor require exponential precision gates. Given the difficulty of achieving high-precision quantum gates in current fault-tolerant schemes \cite{PhysRevLett.110.190502}, our results may provide some guidance for the design of quantum circuits for fault-tolerant quantum computing.

In \cref{sec:qudit}, we discuss the exchange of ancillary qubits and extra energy levels. Extra energy levels help in implementing a variant of multi-qubit operations. For example, Gokhale et al. \cite{ISCAqutrit} utilized qutrits to implement a generalized $n$-Toffoli gate with a depth of $O(\log n)$ without ancillary qutrit. It is also known that incrementor can be implemented on qutrit system in $O(\log^2 n)$ depth without ancillary qutrit \cite{baker2020improved}. Both of these constructions consist of only reversible qutrit gates (compared to the necessity of exponential precision gates proved in \cref{sec:precision}). Yet people fail to find an implementation task that requires deep circuit on qubit system but permits shallow circuit on qutrit system. In \cref{sec:qudit}, we remark that there are efficient exchanges between qubit system and qudit system. Moreover, we define some bounded-space complexity classes from computation theory's perspective and raise some conjecture about these classes.

\section{Preliminaries}\label{sec:preliminaries}
$n$-Toffoli gate is a quantum gate with $n$ control qubits and $1$ target qubit. It computes the function $\textup{AND}_n$ of the $n$ control qubits and add the result to the target qubit. Formally, it is described as
\[\ket{x_1,x_2,\dots,x_n,t}\to\ket{x_1,x_2,\dots,x_n,t\oplus\bigwedge_{i=1}^nx_i}.\]
An illustration for $n=4$ is shown in \cref{subfig:4_Toffoli}.

It can be further generalized to $n$-controlled $U$ gates, that is, a unitary $U$ is performed on the target qubit if and only if all control qubits are at state $\ket{1}$. In this paper, we denote it as $\textup{C}^nU$. An illustration for $n=4$ is shown in \cref{subfig:4_cU}.
\begin{figure}
    \centering
    \begin{subfigure}{0.4\textwidth}
        \centering
        \begin{quantikz}
            \lstick{$x_1$} &\ctrl{1} &\rstick{$x_1$}\\
            \lstick{$x_2$} &\ctrl{1} &\rstick{$x_2$}\\
            \lstick{$x_3$} &\ctrl{1} &\rstick{$x_3$}\\
            \lstick{$x_4$} &\ctrl{1} &\rstick{$x_4$}\\
            \lstick{$t$} &\targ{} &\rstick{$t\oplus\bigwedge_{i=1}^4x_i$}
        \end{quantikz}
        \caption{$4$-Toffoli gate}
        \label{subfig:4_Toffoli}
    \end{subfigure}
    \begin{subfigure}{0.4\textwidth}
        \centering
        \begin{quantikz}
            &\ctrl{1} &\\
            &\ctrl{1} &\\
            &\ctrl{1} &\\
            &\ctrl{1} &\\
            &\gate{U} &
        \end{quantikz}
        \caption{$4$-controlled $U$ gate}
        \label{subfig:4_cU}
    \end{subfigure}
\end{figure}

In quantum computing, people usually decompose unitary into a set of simple and elementary gates. It is generally challenging to physically implement multi-qubit gates on modern quantum computers. To achieve a more systematic approach of quantum computing, unitary are usually decomposed into a set of simple and elementary gates. It is well-known that CNOT and single-qubit gates are universal for quantum computing \cite{nielsen2010quantum}, that is, any unitary can be implemented by a quantum circuit consisting of CNOT and single-qubit gates. In order to clarify mathematical representation, we denote this gate set as $\mathcal{B}_2$. Other interesting universal gate sets include $\mathcal{U}_2$, all two-qubit unitary gates, and $\mathcal{B}_3$, which consists of Toffoli, CNOT and single-qubit gates. The problem we address in this paper is the exact quantum circuit implementation of $n$-Toffoli gates over gate set $\mathcal{B}_2$.

As a quantum resource, ancillary qubits have been extensively utilized in the process of implementing quantum operations, serving to simplify and streamline the overall process.
These qubits can carry intermediate information along the computation process or serve as draft spaces. Since they are expected to reuse later, it is required to recover these ancillary qubits to its initial state after finishing the computation. In principle, any unitary operation $U$ can be implemented without ancillary qubits \cite{sun2023asymptotically}. However, this may results in significant increasing in circuit size and depth. 

In general, there are two kinds of ancillary qubits, clean and dirty. Clean ancillary qubits are ancillary qubits with initial state $\ket{0}$, while dirty ancillary qubits can be at any state initially and should be recovered to its initial state after computation. From the definitions, clean ancillary qubits are harder to obtain and may have stronger ability than dirty ones. However, under some circumstances, use of clean ancillary qubits can be substituted by dirty qubits. \cref{fig:clean_to_dirty} shows an example which will be exploited in this paper for several times. For $U^2=I$, it can be easily verified that the two circuits are equivalent. Notice that in the figure, the implementation of controlled $U$ operation can even use the first register as dirty ancillary qubits.
\begin{figure*}
    \centering
    \begin{quantikz}
        &\qwbundle{} &\ctrl{1} &\\
        & &\gate{U} &
    \end{quantikz}
    $=$
    \begin{quantikz}
        &\qwbundle{} &\ctrl{2} & &\ctrl{2} &\\
        &\qwbundle{} & &\gate{U} & &\\
        \lstick{0} & &\targ{} &\ctrl{-1} &\targ{} &\rstick{0}
    \end{quantikz}
    $\iff$
    \begin{quantikz}
        &\qwbundle{} & &\ctrl{2} & &\ctrl{2} &\\
        &\qwbundle{} &\gate{U} & &\gate{U} & &\\
        \lstick{a} & &\ctrl{-1} &\targ{} &\ctrl{-1} &\targ{} &\rstick{a}
    \end{quantikz}
    \caption{An example for substituting clean ancillary qubits by dirty ones. $U^2=I$.}
    \label{fig:clean_to_dirty}
\end{figure*}
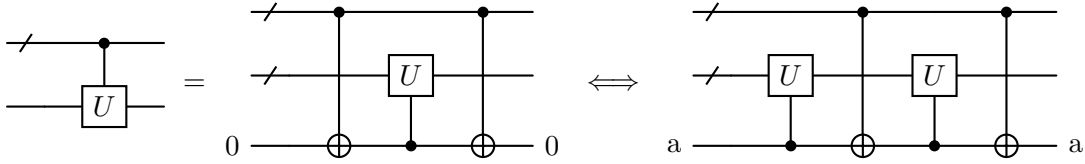

\section{Logarithmic-depth Toffoli with one dirty ancilla}\label{sec:construction}
In this section, we show the promised implementations of $n$-Toffoli gate and $\textup{C}^nU$ gate which have depth $O(\log n)$ with $1$ ancillary qubit. For $n$-Toffoli gate and $\textup{C}^nU$ gate with $U^2=I$, it is enough for the ancilla to be dirty. For $\textup{C}^nU$ gate with $U^2\ne I$, we require it to be clean. Moreover, our constructions keep circuit size $O(n)$. That is, up to one ancillary qubit, our construction is optimal in both size and depth. To approach this, we find a way to create new clean ancillary qubits with the help of existed clean ancillary qubits. Although one cannot expect to turn input qubits into pure clean qubits (which means the qubit is at the particular state $\ket{0}$) due to reversibility, it is somehow possible to create conditionally clean qubits. That is, conditioned that the existed clean qubits are in some partitcular state, some input qubits are clean. This new aspective of utilizing ancillary qubits may be of independent interest in quantum circuit design.

\begin{theorem}\label{thm:Toffoli}
  The $n$-Toffoli gate can be implemented by a quantum circuit over $\mathcal{B}_2$ of depth $O(\log n)$ and size $O(n)$, with the assistance of $1$ clean ancillary qubit. 
\end{theorem}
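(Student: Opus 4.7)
The plan is to bootstrap the single clean ancilla into $\Theta(n)$ ``conditionally clean'' ancillas via a constant-depth doubling primitive iterated $O(\log n)$ times, then compute $\bigwedge_{i=1}^n x_i$ via a log-depth binary-tree AND that uses those ancillas, apply one CNOT to flip the target, and finally reverse the whole construction to uncompute and restore the controls and the ancilla. Because each Toffoli decomposes in $\mathcal{B}_2$ at $O(1)$ size and depth overhead, the asymptotics transfer directly.

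The constant-depth primitive I would formalize as a lemma: given a clean ancilla $a$ and two controls $x_1, x_2$, the three-gate sequence Toffoli$(x_1, x_2; a)$, CNOT$(a; x_1)$, CNOT$(a; x_2)$ stores $x_1 \wedge x_2$ in $a$ and leaves both $x_1$ and $x_2$ equal to $\ket{0}$ conditional on $a=\ket{1}$, producing two conditionally clean qubits from one clean qubit. I would then iterate via a doubling schedule: in round $k$, the $2^{k-1}$ conditionally clean qubits from round $k-1$ serve as targets for $2^{k-1}$ parallel Toffolis that AND $2^{k}$ fresh controls in pairs, followed by a fan-out CNOT layer that conditionally cleans those $2^{k}$ new inputs. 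Each round costs $O(1)$ depth and $O(2^{k})$ gates, so after $\lceil \log_2 n \rceil$ rounds every control has been processed, leaving $\Theta(n)$ conditionally clean qubits and $\Theta(n)$ ``storage'' qubits whose conjunction equals $\bigwedge_i x_i$. I would then feed the storage qubits as leaves of a standard log-depth binary-tree AND with the conditionally clean qubits as ancillas, CNOT the output into the target, and reverse the whole circuit to uncompute. The gate count is $O(n)$ from geometric sums over the rounds plus $O(n)$ from the tree AND, and the depth is $O(\log n)$ everywhere.

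The main obstacle I anticipate is correctness in the case $\bigwedge_i x_i = 0$, where the conditionally clean ancillas may in fact hold arbitrary values, so a naive tree AND could produce a $1$ output and wrongly flip the target. The key observation is that a zero in any original control forces the corresponding pair's storage qubit to retain its freshly initialized $\ket{0}$ value through the Toffoli chain (because a Toffoli with a zero input leaves its target unchanged), and one can then arrange the tree AND so that a zero at any storage leaf propagates up to force the root to $\ket{0}$, provided the tree's root output qubit is itself certifiably at $\ket{0}$ at tree-AND time. Securing a truly clean root -- for instance by dedicating the original $a$ as the tree root after its first use, or by using a compensating hook pattern that cancels the unknown initial ancilla state -- and tracking the nested conditional-cleanness bookkeeping through the doubling rounds is the technical heart of the proof; once that is in place, the size and depth bounds fall out of the geometric sums over rounds and the standard analysis of the binary-tree AND.
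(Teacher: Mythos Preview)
Your doubling primitive and the forward phase are essentially the paper's idea: use one Toffoli plus flips to turn $k$ (conditionally) clean qubits into $2k$ conditionally clean qubits, iterate $O(\log n)$ times. Where your plan breaks is the second phase. After the doubling rounds, your only conditionally clean qubits are the $\Theta(n)$ leaves at the deepest level, and each one is clean only when \emph{its entire ancestor chain} of storage qubits equals $1$. When you try to run a separate binary-tree AND over the storage qubits using these as ancillas, an intermediate node can receive a dirty ancilla whose garbage value is $1$ even though the true partial AND is $0$; this garbage then propagates to the root and flips the target. Concretely: take $a=1$, $s_{1,1}=1$, $s_{2,1}=0$ with $x_7=1,\,x_8=0$; then the child $x_7'$ of $s_{2,1}$ holds $1$, and using it as the ancilla for $T_{s_{1,1}}=s_{1,1}\wedge s_{2,1}\wedge s_{2,2}$ leaves it at $1$, so the final Toffoli$(a,T_{s_{1,1}},T_{s_{1,2}};T)$ fires incorrectly. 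Making the \emph{root} ancilla clean (your proposed fix) does not help, because the corruption is already baked into the controls feeding that root Toffoli.

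The paper avoids this by never separating the two phases: it sets up a recursion in which, at each level, four qubits become conditionally clean and \emph{two of them are reserved} as the target and the ancilla for the recursive sub-Toffoli, rather than all being spent on further doubling. Thus at every level there is a result-holder whose cleanness is conditioned only on the parent's flag, so the step-3 Toffoli at each level combines correct partial ANDs. The depth bound $D(n)=D(n/2)+O(1)$ comes from running only the ``compute'' half of each recursive call before step~3 and deferring the uncompute half to after. Your scheme becomes correct if, at each doubling round, you reserve half of the newly conditionally-clean qubits as result-holders instead of consuming all of them---at which point it is exactly the paper's construction.
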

\begin{proof}
    In order to get rid of rounding, we assume $n$ is even. Proof for odd $n$ is completely the same. The overall construction is shown in \cref{fig:construction}. Step 1, 2 and 3 computes the AND of inputs on target qubit, and step 4 and 5 are merely reverse of step 2 and 1 respectively, thus both inputs and ancillary qubit remain unchanged. $(\frac{n}{2}-2)$-Toffoli gates inside the colored boxes are implemented recursively using the same construction, taking $I_2,I_4$ as clean ancillary qubits and $I_1,I_3$ as target qubits.
    
    For the correctness of our construction, notice that if there is some input among $I_1,I_2,I_3,I_4$ equal to $0$, then after step 1, $A$ will keep at $0$, thus, step 3 will not change the state of $T$.
    If $x_1,x_2,x_3,x_4$ are all equal to $1$, then after step 1, the clean ancillary qubit $A$ will be flipped to $1$, and after the $X$ gates in step 2, $I_1,I_2,I_3,I_4$ are flipped to $0$ which are conditionally clean. After the recursive step, $I_1$ and $I_3$ will store $\textup{AND}(x_5,\dots,x_{\frac{n}{2}+2})$ and $\textup{AND}(x_{\frac{n}{2}+3},\dots,x_n)$ respectively. Finally after step 3, $\textup{AND}(x_5,\dots,x_n)$ will be computed on the target qubit.
    
    Below we analyse the resource assumption of our design. The key point here is that we delay the recovering of input qubits into their original states to the second half. To be concrete, we recursively implement only the first (second) half of the overall construction for these $(\frac{n}{2}-2)$-Toffoli in step 2 (step 4). Since the first half contains step 3, this strategy ensures that we indeed compute the result of these two Toffoli on $I_1$ and $I_3$ while not boosting circuit depth. Now denote the depth of this circuit by $D(n)$. According to \cref{fig:construction}, step 1, 3 and 5 has $O(1)$ depth. And as described above, Toffoli gates in step 2 and 4 only consists of the first and second half of the overall construction respectively, so they share a depth of $D(\frac{n}{2})$. In all, we have
    \[D(n)=D(\frac{n}{2})+O(1),\]
    which results in $D(n)=O(\log n)$. A similar analysis shows that the circuit size is $O(n)$. This concludes the proof.
\end{proof}
\begin{figure*}
    \centering
    \begin{quantikz}
        \lstick{$I_1=x_1$} &\ctrl{1}\slice{1} &\gate{X} &\targ{}\gategroup[8, steps=1, style={mygroup}, background, label style={mylabel1}]{} &\slice{2} &\ctrl{2}\slice{3} & &\targ{}\gategroup[8, steps=1, style={mygroup}, background, label style={mylabel1}]{} &\gate{X}\slice{4} &\ctrl{1}\slice{5} &\\
        \lstick{$I_2=x_2$} &\ctrl{1} &\gate{X} &\push{\diamondsuit}\wire[u][1]{a} & & & &\push{\diamondsuit}\wire[u][1]{a} &\gate{X} &\ctrl{1} &\\
        \lstick{$I_3=x_3$} &\ctrl{1} &\gate{X} & &\targ{}\gategroup[10, steps=1, style={mygroup}, background, label style={mylabel1}]{} &\ctrl{10} &\targ{}\gategroup[10, steps=1, style={mygroup}, background, label style={mylabel1}]{} & &\gate{X} &\ctrl{1} &\\
        \lstick{$I_4=x_4$} &\ctrl{9} &\gate{X} & &\push{\diamondsuit}\wire[u][1]{a} & &\push{\diamondsuit}\wire[u][1]{a} & &\gate{X} &\ctrl{9} &\\
        \lstick{$I_5=x_5$} & & &\ctrl{-3} & & & &\ctrl{-3} & & &\\
        \setwiretype{n}\lstick{\vdots} & & &\vdots & & & &\vdots & & &\\
        \lstick{$I_{\frac{n}{2}+1}=x_{\frac{n}{2}+1}$} & & &\control{} & & & &\control{} & & &\\
        \lstick{$I_{\frac{n}{2}+2}=x_{\frac{n}{2}+2}$} & & &\ctrl{-1} & & & &\ctrl{-1} & & &\\
        \lstick{$I_{\frac{n}{2}+3}=x_{\frac{n}{2}+3}$} & & & &\ctrl{-5} & &\ctrl{-5} & & & &\\
        \setwiretype{n}\lstick{\vdots} & & & &\vdots & &\vdots & & & &\\
        \lstick{$I_{n-1}=x_{n-1}$} & & & &\control{} & &\control{} & & & &\\
        \lstick{$I_n=x_n$} & & & &\ctrl{-1} & &\ctrl{-1} & & & &\\
        \lstick{$A=0$} &\targ{} & & & &\ctrl{1} & & & &\targ{} &\\
        \lstick{$T=t$} & & & & &\targ{} & & & & &
    \end{quantikz}

    \caption{Implementation of $n$-Toffoli gate using one clean ancillary qubit. $\diamondsuit$ represents for ancillary qubit. Step number marks the circuit block to the left of its corresponding barrier.}
    \label{fig:construction}
\end{figure*}
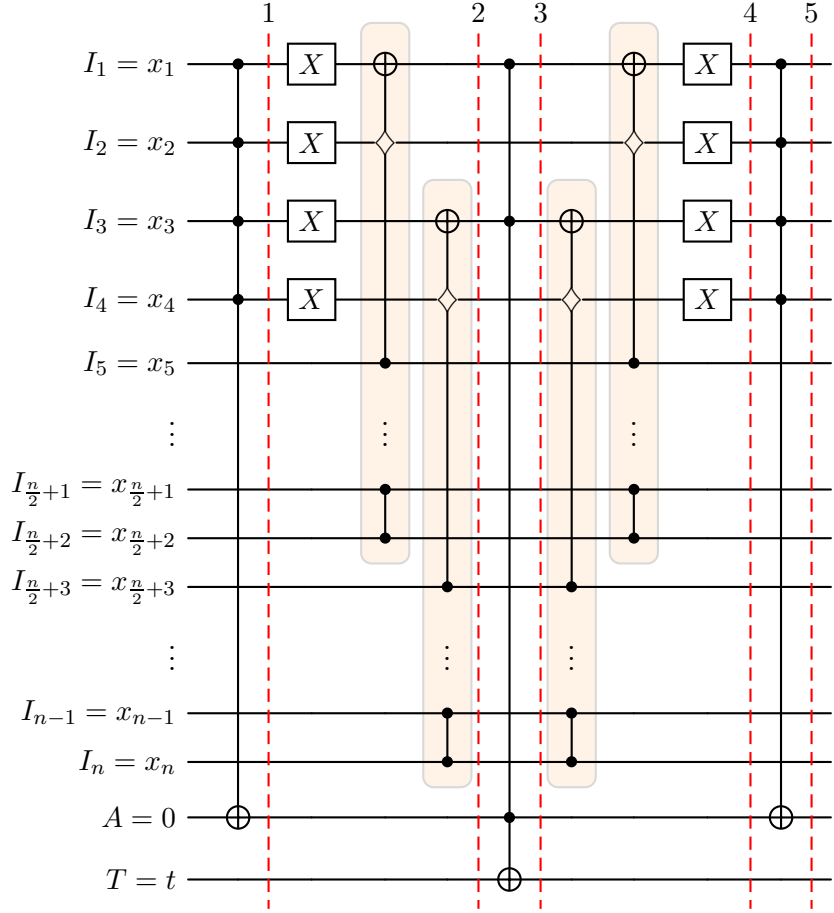

\begin{corollary}
    For any single qubit unitary $U$, $\textup{C}^nU$ gate can be implemented by a quantum circuit over $\mathcal{B}_2$ of depth $O(\log n)$ and size $O(n)$, with the help of $1$ clean ancillary qubit.
\end{corollary}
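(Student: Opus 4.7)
The plan is to reduce $\textup{C}^n U$ to two applications of the $n$-Toffoli gate, using the single clean ancillary qubit $A$ as a scratch place to hold the AND of the $n$ controls. Concretely, I would: (i) use an $n$-Toffoli with controls $x_1,\ldots,x_n$ and target $A$ to set $A \leftarrow A \oplus \bigwedge_i x_i$; (ii) apply a plain controlled-$U$ gate whose control is $A$ and whose target is the target qubit $t$, which is a single two-qubit gate and decomposes into $O(1)$ gates of $\mathcal{B}_2$; (iii) apply the same $n$-Toffoli a second time to uncompute $A$ back to $\lvert 0\rangle$. If $\bigwedge_i x_i=0$, then $A$ stays at $0$ and step (ii) does nothing; if $\bigwedge_i x_i=1$, then $A$ carries a $1$ during step (ii), so $U$ is applied to $t$, and step (iii) resets $A$. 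In either case the controls and the ancilla are returned to their initial values, which is exactly the action of $\textup{C}^n U$.

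The one subtlety is that Theorem~\ref{thm:Toffoli} itself already consumes one clean ancillary qubit to implement its $n$-Toffoli, and the corollary permits only one clean ancilla in total, which is already spoken for as $A$. To get around this, I would run the $n$-Toffoli in steps (i) and (iii) using the target qubit $t$ as a \emph{dirty} ancilla. As mentioned at the start of \cref{sec:construction}, one dirty ancilla suffices for the $n$-Toffoli: this follows by applying the clean-to-dirty substitution of \cref{fig:clean_to_dirty} to the circuit of Theorem~\ref{thm:Toffoli}, which is valid because $X^2=I$ makes $n$-Toffoli self-inverse. Since a dirty ancilla is returned to whatever state it was in at the start of the circuit, $t$ is restored by each of steps (i) and (iii), so the only net change to $t$ is the controlled-$U$ action in step (ii).

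Resource accounting is then immediate: steps (i) and (iii) cost $O(\log n)$ depth and $O(n)$ size by the dirty-ancilla variant of Theorem~\ref{thm:Toffoli}, step (ii) costs $O(1)$, so the overall depth is $O(\log n)$ and overall size is $O(n)$, using only the one clean ancillary qubit $A$. The main obstacle is justifying the dirty-ancilla $n$-Toffoli at the same asymptotic cost as the clean-ancilla version; without it, one would naively need a second clean ancilla (one to hold the AND and one to run the Theorem~\ref{thm:Toffoli} construction), breaking the budget. Once that observation is in place, the corollary is essentially a black-box consequence of Theorem~\ref{thm:Toffoli}.
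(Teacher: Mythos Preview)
Your argument is correct, but it takes a different and more roundabout route than the paper. The paper's proof is a single local edit to the construction of Theorem~\ref{thm:Toffoli}: in step~3 of \cref{fig:construction}, replace the $4$-Toffoli that writes onto the target qubit $T$ by a $\textup{C}^4U$ gate with the same controls and target. Since step~3 is the only place the construction touches $T$, and the surrounding steps~1,~2,~4,~5 compute and uncompute the control logic independently of what is written on $T$, this substitution yields $\textup{C}^nU$ directly with no change in depth, size, or ancilla count. By contrast, you treat Theorem~\ref{thm:Toffoli} as a black box, first upgrade it to a dirty-ancilla $n$-Toffoli via \cref{fig:clean_to_dirty}, and then sandwich a single controlled-$U$ between two such calls, borrowing $t$ as the dirty ancilla so that the lone clean qubit $A$ can hold the AND. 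This is sound, and it has the virtue of being modular---it would work for any $O(\log n)$-depth, one-dirty-ancilla $n$-Toffoli, not just the specific circuit of \cref{fig:construction}---but it pays roughly a factor of four in constants (two Toffoli calls, each doubled by the clean-to-dirty trick) and effectively anticipates the paper's later dirty-ancilla corollary rather than deriving $\textup{C}^nU$ straight from the clean-ancilla construction.
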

\begin{proof}
    Simply substitute the $4$-Toffoli gate in step 3 of \cref{fig:construction} by a $\textup{C}^4U$ gate with the same control and target qubits.
\end{proof}

Now we can use the simple trick in \cref{fig:clean_to_dirty} to turn the clean ancillary qubit into dirty one. 
\begin{corollary}
    For any single qubit unitary $U$ such that $U^2=I$, $\textup{C}^nU$ gate can be implemented by a quantum circuit over $\mathcal{B}_2$ of depth $O(\log n)$ and size $O(n)$, with the help of $1$ dirty ancillary qubit.
\end{corollary}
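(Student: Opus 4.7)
The plan is to convert the $\textup{C}^nU$ circuit with one clean ancilla from the preceding corollary into one using only a dirty ancilla by applying the equivalence in \cref{fig:clean_to_dirty}, which becomes available precisely because $U^2 = I$.

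I would first identify the compute--use--uncompute structure of the clean ancilla $A$ in the preceding corollary's circuit: in \cref{fig:construction} (with the step-3 inner gate replaced by $\textup{C}^4U$), steps 1--2 prepare $A$ and the input register so that $A=1$ and $I_1=I_3=1$ exactly when all controls equal $1$; step 3 then applies $U$ to the target $T$ controlled on $I_1,I_3,A$; and steps 4--5 reverse the preparation. This matches the clean-ancilla template of the middle diagram of \cref{fig:clean_to_dirty} at the level of how $A$ and $T$ interact.

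Using $U^2 = I$, I would then apply the substitution shown in \cref{fig:clean_to_dirty}: replace this clean-ancilla template with the dirty-ancilla template on the right, which amounts to inserting two extra $(CU)_{A\to T}$ gates around the $U$-applying step. A direct calculation as in the figure's caption shows that for any initial ancilla state $\ket{a}$, the two extra $CU$ gates contribute $U^{2a}=I$ on the target when the overall AND of the controls is zero and $U^{2a+1}=U$ when it is one, so the target is acted on exactly by $U^{\bigwedge x_i}$; the unchanged compute-uncompute pair restores $A$ to $\ket{a}$.

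The resource overhead is only $O(1)$ two-qubit $CU$ gates, so the depth and size remain $O(\log n)$ and $O(n)$ respectively. The main thing to verify is that the extra $CU$ gates can be inserted without disturbing the recursive blocks inside steps 2 and 4 of \cref{fig:construction}; but those blocks act on the input register and leave $A$ and $T$ untouched, so they commute with the inserted $CU$ on $(A,T)$, and the insertion is non-interfering. This yields the promised implementation of $\textup{C}^nU$ with one dirty ancillary qubit.
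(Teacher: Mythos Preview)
Your proposal has a genuine gap. The two extra $(CU)_{A\to T}$ gates you insert ``around step~3'' are both applied \emph{after} step~1 has already toggled the ancilla, so both see the same value $A=a\oplus c$ with $c=\bigwedge_{i=1}^4 x_i$; together they contribute $U^{2(a\oplus c)}=I$ and hence do nothing. (Your claim that they contribute $U^{2a+1}$ when the overall AND is $1$ cannot be right: two applications of an $A$-controlled gate always contribute an even power of $U$.) What remains is simply the clean-ancilla circuit of \cref{fig:construction} run with a dirty ancilla, and that is incorrect. Concretely, take $a=1$ and $x_1=\cdots=x_n=1$: after step~1 we have $A=0$, so step~3's $\textup{C}^3U$ with controls $I_1,I_3,A$ never fires, and the target is left unchanged instead of receiving $U$. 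Conversely, when $a=1$ and $c=0$ the garbage values of $I_1,I_3$ produced by step~2 may combine with $A=1$ to fire step~3 spuriously.

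The root cause is a mis-identification of the template in \cref{fig:clean_to_dirty}: that trick requires the middle block to be an operation controlled \emph{solely} by the ancilla, whereas your step~3 also depends on $I_1,I_3$, whose values are only meaningful when $c=1$. The paper instead takes the entire block of steps~2,~3,~4 as the ancilla-controlled operation: since step~4 inverts step~2 and step~3 has $A$ among its controls, this block is the identity whenever $A=0$ and equals some $W=S_2^{-1}(\textup{C}^2U)S_2$ when $A=1$, with $W^2=I$ following from $U^2=I$. The first register is then $I_1,\dots,I_4$ and the ``Toffoli'' of \cref{fig:clean_to_dirty} is exactly step~1$=$step~5. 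Applying the dirty-ancilla equivalence therefore means duplicating the whole block, yielding the step sequence $2,3,4,1,2,3,4,5$; this costs a constant factor in depth and size, not merely $O(1)$ extra two-qubit gates, but the bounds $O(\log n)$ and $O(n)$ are preserved.
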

\begin{proof}
    We use the trick shown in the second equivalence of \cref{fig:clean_to_dirty}. In \cref{fig:construction}, mark $I_1,I_2,I_3,I_4$ as the first register, and $A$ as the clean ancillary qubit. The unitary $U$ is simply what is controlled by $A$ in step 2, 3 and 4. Notice that in \cref{fig:clean_to_dirty}, the implementation of $U$ can even use the first register as long as it keeps the state of first register unchanged. The overall sequence is step 2, 3, 4, 1, 2, 3, 4, 5.
\end{proof}

\section{Toffoli without ancilla}\label{sec:toffoli_0_anc}
In this section, we show how to implement $n$-Toffoli gate using an $O(\log ^2n)$ depth quantum circuit without ancillary qubit. We adopt the framework from Gidney \cite{gidney2015} in which quantum incrementor is utilized in an elegant way. Our main contribution is a poly-logarithmic depth circuit for quantum incrementor using $1$ dirty ancillary qubit. The main idea is that we can compute the carry of lower bits using multi-qubit Toffoli gate from \cref{sec:construction}, which meanwhile creates $O(n)$ conditional clean ancilllary qubits, and this allows us to call the conventional logarithmic depth carry-lookahead quantum adder from \cite{draper2004logarithmic}.

\subsection{Quantum incrementor circuit}\label{subsec:plus_one}
Here we describe our poly-logarithmic depth design for incrementor. The circuit computes the following map:
\[\ket{x}\to\ket{(x+1)\mod{2^n}},\]
where $x$ is an $n$-bit binary number. 
\begin{theorem}\label{thm:plus_one}
    Quantum incrementor can be implemented by a quantum circuit over $\mathcal{B}_2$ of depth $O(\log ^2n)$ and size $O(n)$ using $1$ clean ancillary qubit.
\end{theorem}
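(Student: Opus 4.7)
The plan is divide-and-conquer on register width. Split the $n$-qubit input into a lower half $L=(x_1,\dots,x_{n/2})$ and an upper half $H=(x_{n/2+1},\dots,x_n)$, and exploit the identity that $(x+1)\bmod 2^n$ restricted to $L$ equals $(L+1)\bmod 2^{n/2}$, while restricted to $H$ it equals $(H+c)\bmod 2^{n/2}$, where the carry $c=\bigwedge_{i=1}^{n/2}x_i$ depends only on $L$. The task thus factors into (i) a conditional $+c$ on $H$, which I would execute in depth $O(\log n)$ and size $O(n)$, and (ii) an unconditional $+1$ on $L$, which I would handle recursively on $n/2$ bits with the same single clean ancilla. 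This gives the recurrences $D(n)=D(n/2)+O(\log n)$ and $S(n)=S(n/2)+O(n)$, yielding $D(n)=O(\log^2 n)$ and $S(n)=O(n)$.

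The heart of the plan is step~(i). Draper's carry-lookahead adder attains $O(\log n)$ depth and $O(n)$ size only after being supplied $\Theta(n)$ clean ancillary qubits, and we have merely one. To manufacture the missing workspace on the fly, I would run the construction from \cref{sec:construction} to XOR the carry $c$ onto the lone clean ancilla: as can be read off from \cref{fig:construction}, the computation is structured so that by the time $c$ has been accumulated, a linear-size subset of the $L$-qubits has been put through $X$ gates, and hence sits in $\ket{0}$ precisely on the branch $c=1$, i.e., they are conditionally clean with respect to the ancilla. Treating this newly created pool as workspace, I would then invoke Draper's carry-lookahead adder to map $H\mapsto H+c$, and afterwards reverse the Toffoli-style subroutine to restore $L$. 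Recursion then proceeds on $L$ with the clean ancilla once more available.

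The main obstacle I anticipate is correctness on the complementary branch $c=0$: there the ``workspace'' qubits are not in $\ket{0}$ but carry the bit-flipped values of $L$, so we need Draper's adder to act as the identity, otherwise the subsequent uncomputation will fail to restore $L$. The cleanest fix is to control every elementary gate of the adder on $c$, which lifts each CNOT to a Toffoli and each Toffoli to a $\textup{C}^2 X$; both decompose over $\mathcal{B}_2$ in $O(1)$ depth and size, so the asymptotic cost of the block is unaffected. To keep depth logarithmic one fans $c$ out into $\Theta(n)$ copies via a log-depth CNOT tree written into the conditionally clean workspace and uncomputes the fanout after the adder. Together with a hand-designed constant-size base case, this closes the recurrences above.
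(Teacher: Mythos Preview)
Your divide-and-conquer skeleton matches the paper's proof (which uses a $\tfrac{9}{10}$--$\tfrac{1}{10}$ split rather than $\tfrac12$--$\tfrac12$), and the recurrences you write are the right ones. Two points in your execution of step~(i), however, leave real gaps.

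The serious one is your fix for the $c=0$ branch. Fanning $c$ out by a CNOT tree into the conditionally clean workspace yields genuine copies of $c$ only when $c=1$; when $c=0$ the CNOTs (control~$0$) act as the identity, so the ``copies'' remain the bit-flipped $L$-values, which are arbitrary. Controlling the adder gates on these bits then fires an unpredictable subset of them on the $c=0$ branch, corrupting both $H$ and the scratch space in a way the subsequent uncomputation cannot undo. The paper handles this differently: it observes that in Draper's carry-lookahead adder only the constant-depth writing stage touches the addend, while the $P$- and $G$-rounds act solely on ancilla and are undone by the adder's own uncompute, so only the writing stage needs to be controlled by $A$; for that stage it parallelises the \emph{dirty}-ancilla trick of \cref{fig:clean_to_dirty} (the second equivalence), which is correct regardless of the workspace contents because each write gate squares to the identity. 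A more minor pair of issues: the conditionally clean pool is not obtained from the internal $X$ gates of \cref{fig:construction} (at the midpoint those qubits have already been overwritten with partial AND results, so they are not in $\ket{0}$ even on the branch $c=1$), but simply by running the full Toffoli and then applying an explicit layer of $X$ to \emph{all} lower bits, as in \cref{fig:plus_one}; and an even split is too tight, since Draper's $m$-bit adder already needs order-$m$ ancilla before you reserve any fanout copies---this is why the paper takes the lower block to be $\tfrac{9}{10}n$ bits and runs the adder only on the remaining $\tfrac{1}{10}n$.
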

\begin{proof}
    The overall construction is shown in \cref{fig:plus_one}. In the figure, we use $+1$ gate to represent for incrementor on the corresponding qubits. In the circuit, step 1 and 3 are $\frac{9}{10}n$-Toffoli gate which has depth $O(\log n)$ and size $O(\frac{9}{10}n)$ according to \cref{sec:construction}. Step 2 contains an ancilla controlled incrementor using the lower $\frac{9}{10}n$ qubits as clean ancillary qubits. Since we have enough clean ancillary qubits, we can safely call the carry-lookahead modulo-$2^{\frac{1}{10}n}$ adder from Section 5.1 of \cite{draper2006logarithmic} which has depth $O(\log n)$ and size $O(\frac{1}{10}n)$. But we need the adder to be controlled by $A$ without boosting circuit depth and size. If we had $\frac{1}{10}n$ clean ancillary qubits, we could copy $A$ on these qubits by a fan-out gate (many CNOT gates that share the same control qubit), and, since there are at most $\frac{1}{10}n$ gates for each layer of the adder, we could use each of these copied $A$ to control one gate of the layer. In other words, we could parallelize the first equivalence of \cref{fig:clean_to_dirty}. However, here we only have dirty ancillary qubits. Fortunately, the second equivalence of \cref{fig:clean_to_dirty} comes to rescue. Since we spare up to $\frac{9}{10}n$ space, we still have enough qubits to serve as dirty ancillary qubits. Also, by investigating the detail design of Section 5.1 in \cite{draper2006logarithmic}, not all layers need to be controlled. In fact, the adder is made of several stages: P stage and G stage of $O(\log n)$ depth, and writing stage of $O(1)$ depth. The first two stages only affect ancillary qubits and never change input qubits, so they don't need to be controlled. Only the writing stage need to be controlled by $A$ and the controlled version can be implemented in $O(\log n)$ depth by utilizing \cref{fig:clean_to_dirty} in parallel. In all, we can indeed implement this controlled adder in depth $O(\log n)$ and size $O(\frac{1}{10}n)$. Step 4 is a $\frac{9}{10}n$ size incrementor that is implemented recursively using $A$ as clean ancillary qubit. Let $D(n)$ be circuit depth and $S(n)$ be circuit size. According to analysis above, we have
    \begin{align*}
        D(n)&=D(\frac{9}{10}n)+O(\log n);\\
        S(n)&=S(\frac{9}{10}n)+O(\frac{9}{10}n)+O(\frac{1}{10}n).
    \end{align*}
    Solving these recursions, we have $D(n)=O(\log^2 n)$ and $S(n)=O(n)$.

    We prove the correctness by looking at the value of $A$ at step 2. If $A=0$ at step 2, then there are $0$'s among lower bits of the input, and the incrementor on $n$ bits is equivalent to a smaller incrementor on the lower part which has no carry. In the circuit, step 1 and 3 cancel, and step 4 indeed computes a smaller incrementor.
    If $A=1$ at step 2, then all the lower bits are equal to $1$, which means incrementor should produce a carry from the lower part to the higher part. In fact, it is equivalent to a smaller incrementor on the higher part which is indeed computed at step 2 in our circuit. Up to now, the higher parts are settled. For the lower parts, as stated above, step 4 computes incrementor on the lower parts and the result is all zero as expected. This concludes the proof.
\end{proof}
\begin{figure}
    \centering
    \begin{quantikz}        \lstick{$x_{[1\dots\frac{9}{10}n]}$} &\qwbundle{\frac{9}{10}n} &\ctrl{2}\slice{1} &\gate{X} &\push{\diamondsuit}\wire[d][1]{a} &\gate{X}\slice{2} &\ctrl{2}\slice{3} &\gate{+1}\slice{4} &\\
        \lstick{$x_{[\frac{9}{10}n\dots n]}$} &\qwbundle{\frac{1}{10}n} & & &\gate{+1} & & & &\\
        \lstick{$A=0$} & &\targ{} & &\ctrl{-1} & &\targ{} &\push{\diamondsuit}\wire[u][2]{a} &
    \end{quantikz}
    \caption{Quantum circuit for incrementor. $\diamondsuit$ represents for ancillary qubit. The final incrementor is implemented recursively using $A$ as clean ancillary qubit. Step number marks the circuit block to the left of its corresponding barrier.}
    \label{fig:plus_one}
\end{figure}
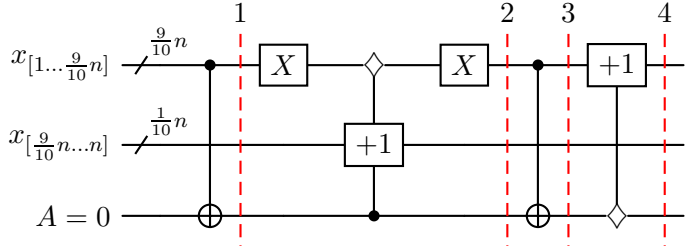

\subsection{Constructing Toffoli}\label{subsec:toffoli_0_anc}
Here we give the final construction for multi-qubit Toffoli gate. Briefly, we copy Gidney's beautiful construction \cite{gidney2015} and replace the $+1$ gate by our design in \cref{thm:plus_one}. However, we do not have one clean ancillary qubit required in \cref{thm:plus_one}. To settle this, we create a conditionally clean qubit from scratch. Before presenting our construction, we show that fan-out like $Z(\theta)$ gates can be implemented in parallel, where
\[Z(\theta)=\begin{bmatrix}
    1 &\\
     &\exp(i\theta)
\end{bmatrix}\]
is the conditional phase operator.
\begin{lemma}\label{lem:rz_fanout}
    $n$ arbitrary $Z(\theta_i)$ gates acting on $n$ qubits which are controlled by the same qubit can be implemented by a quantum circuit over $\mathcal{B}_2$ of depth $O(\log n)$ and size $O(n)$ using no ancillary qubit.
\end{lemma}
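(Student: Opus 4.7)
The plan is to start from the textbook identity
\[
\text{CZ}(\theta)_{c,t}=Z_c(\theta/2)\,Z_t(\theta/2)\,\text{CNOT}_{c,t}\,Z_t(-\theta/2)\,\text{CNOT}_{c,t},
\]
applied in parallel to each of the $n$ commuting controlled-$Z(\theta_i)$ gates. Because all the outer single-qubit $Z$ rotations commute and act on distinct qubits (the control $c$ or one of the targets), they collapse into one depth-$1$ layer consisting of a single $Z_c(\sum_i\theta_i/2)$ on the shared control together with a $Z_{t_i}(\theta_i/2)$ on each target. What survives is the diagonal operation $\prod_{i=1}^n \text{CNOT}_{c,t_i}\,Z_{t_i}(-\theta_i/2)\,\text{CNOT}_{c,t_i}$, which by reordering commuting factors is nothing but a fan-out CNOT from $c$ to all the $t_i$'s, a depth-$1$ layer of $Z_{t_i}(-\theta_i/2)$ rotations on the targets (now holding $c\oplus t_i$), and a second fan-out CNOT that uncomputes.

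The core technical step is therefore to realize ancilla-free fan-out CNOT on an $(n+1)$-qubit register in $O(\log n)$ depth and $O(n)$ size. I plan to do this with a balanced binary-tree doubling cascade on the $n+1$ qubits: after a single CNOT $c\to t_1$ the set of qubits whose $\mathbb{F}_2$-linear combination contains $c$ doubles from $\{c\}$ to $\{c,t_1\}$; in the next layer these two qubits source two parallel CNOTs into fresh targets, making the infected set of size $4$; iterating gives size $2^k$ after $k$ layers, and the full register is infected in $O(\log n)$ layers. Because the chosen sources are not clean copies of $c$ but instead carry their own $t_j^{\text{orig}}$ contents along for the ride, each newly infected qubit picks up spurious $t_j^{\text{orig}}$ summands; these are systematically scrubbed out by a matching reverse sweep of CNOTs up the same tree. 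The forward and reverse sweeps each contribute $O(\log n)$ depth and $O(n)$ CNOT gates.

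The main obstacle is verifying that the two sweeps together realize exactly the fan-out map $(c,t_1,\ldots,t_n)\mapsto(c,c\oplus t_1,\ldots,c\oplus t_n)$ with no residual mixing among the original $t_j$'s. I would establish this by tracking each qubit's $\mathbb{F}_2$-linear combination layer by layer, and showing by induction on tree depth that the reverse sweep is the precise inverse of the extra $t_j^{\text{orig}}$-mixing introduced on the way down, while preserving the freshly installed $c$-bit on every target. Once the log-depth ancilla-free fan-out is in hand, sandwiching the parallel $Z_{t_i}(-\theta_i/2)$ layer between two copies of it, plus the outer single-qubit $Z$ correction layer, gives the desired circuit of depth $O(\log n)$ and size $O(n)$ using no ancillary qubit.
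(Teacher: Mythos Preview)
Your high-level decomposition---write each controlled $Z(\theta_i)$ via the standard identity, collapse the commuting single-qubit $Z$ rotations into one layer plus a global $Z(\sum_i\theta_i/2)$ on the control, and reduce to two ancilla-free fan-out CNOT blocks sandwiching a depth-$1$ layer of $Z(-\theta_i/2)$'s---is exactly the paper's argument. The paper simply invokes the log-depth ancilla-free fan-out as well known and does not spell it out.

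Where your proposal has a gap is in the specific fan-out construction you sketch. In the doubling cascade you describe (layer~$1$: $c\to t_1$; layer~$2$: $c\to t_2$, $t_1\to t_3$; and so on), after the forward sweep each target $t_i$ holds $c\oplus\bigoplus_{j\in P(i)}t_j^{\mathrm{orig}}$ where $P(i)$ is the root-to-$t_i$ path. To scrub the spurious summands from $t_i$ you must XOR in the current value of its tree-parent $s(i)$; but that value is $c\oplus\bigoplus_{j\in P(s(i))}t_j^{\mathrm{orig}}$, so the CNOT removes the junk \emph{and} the $c$-bit simultaneously, leaving $t_i^{\mathrm{orig}}$ rather than $c\oplus t_i^{\mathrm{orig}}$. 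After your reverse sweep only the $O(\log n)$ targets whose parent is $c$ itself carry $c$; the remaining $n-O(\log n)$ targets need $c$ re-installed, which is another fan-out of essentially the original size. No choice of ``matching reverse sweep up the same tree'' avoids this, because every non-$c$ source already contains $c$.

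The standard fix is to place the single CNOT $c\to t_1$ in the \emph{middle} rather than at the start: first run a binary cascade purely among the targets (e.g.\ $t_i\to t_{i+n/2}$ for $1\le i\le n/2$, then recurse on $t_1,\dots,t_{n/2}$), then apply the one CNOT $c\to t_1$, then undo the target-only cascade. This conjugation realizes exactly $(c,t_1,\dots,t_n)\mapsto(c,c\oplus t_1,\dots,c\oplus t_n)$ with depth $2\lceil\log_2 n\rceil+1$ and $2n-1$ CNOTs. With this correction the rest of your argument goes through and matches the paper.
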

\begin{proof}
    The overall construction is shown in \cref{fig:rz_fanout}. In the figure, we use $\gamma$ as an abbreviation to $Z(\gamma)$. The parameter $\theta$ is set to be equal to $\sum_{i=1}^n\frac{\theta_i}{2}$. It is well-known that the two fan-out gates can be implemented in depth $O(\log n)$ and size $O(n)$, so the resource assumption is as claimed. For the $i$-th qubit with input $x_i$, it is unchanged if $c=0$, and it accumulate a phase of
    \[x_i\cdot\frac{\theta_i}{2}+(1-x_i)\cdot\left(-\frac{\theta_i}{2}\right)=x_i\cdot\theta_i-\frac{\theta_i}{2}\]
    if $c=1$. The first phase is as desired, and the second rotation is compensated by a phase of $\theta$ if $c=1$.
\end{proof}
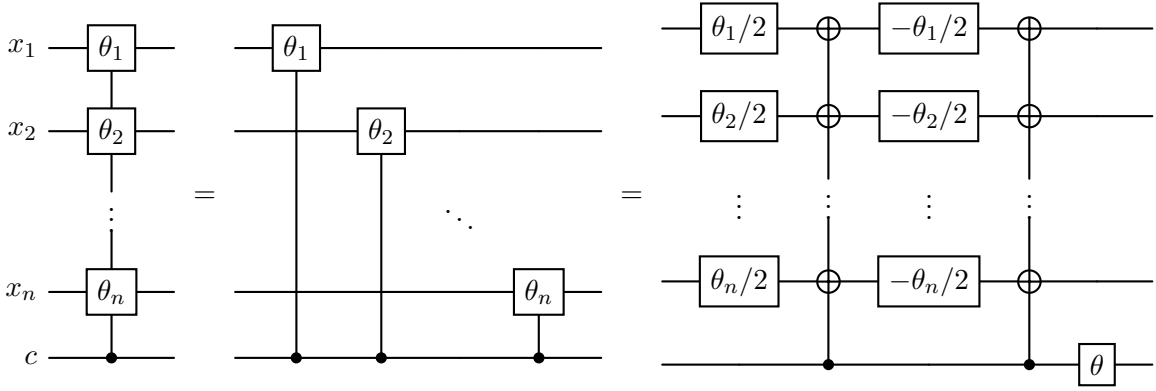
\begin{figure*}[t]
    \centering
    
    \begin{quantikz}
        \lstick{$x_1$} &\gate{\theta_1}\wire[d][1]{a} &\\
        \lstick{$x_2$} &\gate{\theta_2}\wire[d][1]{a} &\\
        \setwiretype{n}&\vdots &\\
        \lstick{$x_n$} &\gate{\theta_n}\wire[u][1]{a} &\\
        \lstick{$c$} &\ctrl{-1} &
    \end{quantikz}
    $=$
    \begin{quantikz}
        &\gate{\theta_1} & & & &\\
        & &\gate{\theta_2} & & &\\
        \setwiretype{n}& & &\ddots & &\\
        & & & &\gate{\theta_n} &\\
        &\ctrl{-4} &\ctrl{-3} & &\ctrl{-1} &
    \end{quantikz}
    $=$
    \begin{quantikz}
        &\gate{\theta_1/2} &\targ{}\wire[d][1]{a} &\gate{-\theta_1/2} &\targ{}\wire[d][1]{a} & &\\
        &\gate{\theta_2/2} &\targ{}\wire[d][1]{a} &\gate{-\theta_2/2} &\targ{}\wire[d][1]{a} & &\\
        \setwiretype{n}&\vdots &\vdots &\vdots &\vdots & &\\
        &\gate{\theta_n/2} &\targ{}\wire[u][1]{a} &\gate{-\theta_n/2} &\targ{}\wire[u][1]{a} & &\\
        & &\ctrl{-1} & &\ctrl{-1} &\gate{\theta} &
    \end{quantikz}
    \caption{Implementation of $n$ arbitrary $R_z$ gates controlled by the same qubit.}
    \label{fig:rz_fanout}
\end{figure*}

Now we are ready to present the final construction.
\begin{theorem}
    $n$-Toffoli gate can be implemented by a quantum circuit over $\mathcal{B}_2$ of depth $O(\log^2n)$ and size $O(n)$ using no ancillary qubit.
\end{theorem}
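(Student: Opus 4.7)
The plan is to follow Gidney's ancilla-free $n$-Toffoli recipe~\cite{gidney2015}, with our incrementor of \cref{thm:plus_one} substituted for Gidney's $+1$ gate. Gidney rewrites $\textup{C}^nX$ on target $t$ as $H_t(\textup{C}^nZ)H_t$ and realises the diagonal $\textup{C}^nZ$ by a phase trick: conjugate a $+1$ by a layer of single-qubit $R_z$ gates with angles $\theta_j=2^j\pi/2^{n+1}$, then compose with a $-1$. The per-bit phases cancel on every basis state except the all-ones one, on which a $-1$ phase survives. Plugging \cref{thm:plus_one} in for each $+1$ and $-1$ already yields $O(\log^2n)$ depth and $O(n)$ size --- provided we can furnish the one clean ancilla that \cref{thm:plus_one} demands.

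To supply this ancilla from scratch, I would use two ideas. First, \cref{lem:rz_fanout} realises any fan-out of $R_z$ gates controlled by a common qubit in $O(\log n)$ depth and $O(n)$ size without ancilla, so any piece of Gidney's sandwich that we decide to promote to controlled-by-$t$ (or controlled-by-$x_n$) form costs no more than its uncontrolled counterpart. Second, I would create a conditionally clean qubit by running a short preparation on the control register: applying an $(n-1)$-Toffoli from \cref{thm:Toffoli} with some $x_j$ as its target flips $x_j$ into $\ket{0}$ exactly when all remaining controls are $1$, i.e.\ precisely on the all-ones input where the $n$-Toffoli acts non-trivially. I would then hand this conditionally clean qubit to \cref{thm:plus_one} as its clean ancilla, run Gidney's sandwich to install the desired $\textup{C}^nX$ action, and finally re-apply the $(n-1)$-Toffoli to restore $x_j$.

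The main obstacle will be correctness on inputs where the controls are not all ones, because there the ``ancilla'' is not truly $\ket{0}$ and the recursion inside \cref{thm:plus_one} takes an unintended branch. The rescue is that Gidney's identity produces a non-trivial effect only on the all-ones basis state, so the composition of the two incrementor calls together with the intervening $R_z$ fan-out layers must act as the identity on every non-all-ones input regardless of the ancilla's initial value --- any spurious side effect of the first incrementor is mirrored and undone by the second. Verifying this cancellation honestly, by chasing the branching of \cref{thm:plus_one} when its ancilla is dirty, will be the technical heart of the proof. Once cancellation is secured, the resource count is immediate: two incrementor calls at $O(\log^2n)$ depth and $O(n)$ size by \cref{thm:plus_one}, two $R_z$ fan-out layers at $O(\log n)/O(n)$ by \cref{lem:rz_fanout}, and the preparation plus its reverse at $O(\log n)/O(n)$ by \cref{thm:Toffoli}, summing to $O(\log^2n)$ depth and $O(n)$ size with no ancillary qubit.
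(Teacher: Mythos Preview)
Your overall framework --- Gidney's phase sandwich with the incrementor of \cref{thm:plus_one} and \cref{lem:rz_fanout} for controlled $R_z$ fan-outs --- matches the paper. The gap is your cancellation claim. With a dirty ancilla value $a$, the circuit of \cref{thm:plus_one} realises some permutation $\pi_a$ of the data register (with $\pi_0=+1$), and its reverse realises $\pi_a^{-1}$; but Gidney interposes a diagonal $R_z$ layer $D$ \emph{between} them, so the relevant block is $\pi_a^{-1}D\,\pi_a$, a diagonal in a permuted basis whose phases land on the wrong computational-basis states. The gadget is therefore \emph{not} the identity on non-all-ones inputs, and ``spurious side effects mirrored and undone'' is simply false. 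There is also a structural issue: once $x_j$ serves as the incrementor's ancilla it cannot simultaneously be one of the bits being incremented, so you are in effect running an $(n-1)$-sandwich on the remaining controls, and on the input with $x_j=0$ but all other $x_i=1$ that sandwich wants to fire while your ancilla sits at $1$ --- exactly the dirty-ancilla situation with no mechanism to suppress it. (Minor: the $(n-1)$-Toffoli you invoke from \cref{thm:Toffoli} itself needs a clean ancilla; you would have to use the dirty-ancilla corollary with $t$ borrowed.)

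The paper's fix is the simpler manoeuvre you already half-gesture at: run Gidney's $(n-1)$-sandwich on $x_1,\dots,x_{n-1},t$, make every gate in it \emph{except} the $+1$ and $-1$ controlled by $x_n$, and hand $x_n$ itself (after $X$-conjugation) to \cref{thm:plus_one} as its ancilla. When $x_n=1$ the ancilla is genuinely clean and Gidney's identity applies verbatim. When $x_n=0$ every $R_z$ layer and the initial Toffoli block are switched off, so the $+1$ and $-1$ circuits become literally adjacent inverse circuits and cancel as operators --- no need to analyse what either does on a dirty ancilla. \cref{lem:rz_fanout} is used precisely here, to add the single control $x_n$ to each $R_z$ layer in $O(\log n)$ depth.
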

\begin{proof}
    The overall construction is shown in \cref{fig:gidney}. Circuit inside the colored box without control is the original construction from Gidney for $(n-1)$-Toffoli gate \cite{gidney2015}. The $n$-th input bit $x_n$ controls all gates inside colored box except for the $+1$ and $-1$ gates. In particular, all the $R_z$ gates are controlled by $x_n$. The $+1$ gate adopts our design in \cref{thm:plus_one} using $x_n$ as clean ancillary qubit and $-1$ gate is merely the reverse of $+1$. For correctness of this circuit, consider different values of $x_n$. If $x_n=0$, no operations inside colored box are performed except for $+1$ and $-1$, while $+1$ and $-1$ cancel. If $x_n=1$, all operations inside colored box are performed. $+1$ and $-1$ gates function correctly, because the $n$-th bit is ensured to be of value $1$ when they are performed and it is conditionally clean. (Note that \cref{thm:plus_one} requires a clean ancillary qubit that has value $0$, and this is easily achieved by adding two $X$ gates. We omit this detail in \cref{fig:gidney} due to limitation of space.) Gidney ensures us the correctness in this case.

    This circuit has depth $O(\log^2n)$ and size $O(n)$. This is easy to obtain by combining \cref{thm:Toffoli}, \cref{thm:plus_one}, and \cref{lem:rz_fanout}.
\end{proof}

\begin{figure*}[t]
    \centering
        \begin{tikzpicture}\node[scale=0.8]{
    \begin{quantikz}
        \lstick{$x_1$} &\gategroup[6, steps=14, style={mygroup}, background, label style={mylabel1}]{} &\ctrl{1} & & & &\ctrl{1} & & & & &\gate[5]{+1} & &\gate[5]{-1} &\gate{\sqrt[2^{n-1}]{Z}}\wire[d][1]{a} &\\
        \lstick{$x_2$} & &\ctrl{1} & & & &\ctrl{1} & & & & & &\gate{\sqrt[2^{n-1}]{Z}^\dag}\wire[d][1]{a} & &\gate{\sqrt[2^{n-1}]{Z}}\wire[d][1]{a} &\\
        \setwiretype{n}& &\vdots & & & &\vdots & & & & & &\vdots & &\vdots &\\
        \lstick{$x_{n-2}$} & &\ctrl{2}\wire[u][1]{a} & & & &\ctrl{2}\wire[u][1]{a} & & & & & &\gate{\sqrt[8]{Z}^\dag}\wire[u][1]{a} & &\gate{\sqrt[8]{Z}}\wire[u][1]{a} &\\
        \lstick{$x_{n-1}$} & & & &\ctrl{1} & & & &\ctrl{1} & & & &\gate{\sqrt[4]{Z}^\dag}\wire[u][1]{a} & &\gate{\sqrt[4]{Z}}\wire[u][1]{a} &\\
        \lstick{$t$} &\gate{H} &\targ{} &\gate{\sqrt[4]{Z}} &\targ{} &\gate{\sqrt[4]{Z}^\dag} &\targ{} &\gate{\sqrt[4]{Z}} &\targ{} &\gate{\sqrt[4]{Z}^\dag} &\gate{H} & & & & &\\
        \lstick{$x_n$} &\ctrl{-1} &\ctrl{-1} &\ctrl{-1} &\ctrl{-1} &\ctrl{-1} &\ctrl{-1} &\ctrl{-1} &\ctrl{-1} &\ctrl{-1} &\ctrl{-1} &\push{\heartsuit}\wire[u][2]{a} &\ctrl{-2} &\push{\heartsuit}\wire[u][2]{a} &\ctrl{-2} &
    \end{quantikz}
    };\end{tikzpicture}
    \caption{Zero ancilla construction for multi-qubit Toffoli gate. $\heartsuit$ represents for clean ancillary qubit that has value $1$. Notice that by adding to $X$ gates, it is equivalent to a clean ancillary qubit at value $0$. We omit this detail due to limitation of space.}
    \label{fig:gidney}
\end{figure*}

\section{Ancillary qubits and gate precision}\label{sec:precision}
In this section, we investigate the relationship between number of ancillary qubits and gate precision in implementations of unitary. We prove the following result: under the constraint of no ancillary qubit, both $n$-Toffoli gate and quantum incrementor cannot be implemented without using exponential precision gates. This result means that without high precision gates, our construction for $n$-Toffoli gates described in \cref{thm:Toffoli} is optimal simultaneously in circuit size, depth and number of ancillary qubits. We hope this investigation may guide future constructions of quantum circuits that is more suitable for current fault-tolerating schemes.

We say low precision gate to represent for CNOT and single-qubit gates whose phases of determinants have the form $\frac{p}{q}$ with $p,q$ intager and $q=o(2^n)$. In his construction of $n$-Toffoli gate, Gidney utilizes high precision gate $R_z(\pi/2^n)$ whose determinant is $\exp(i\pi/2^n)$ \cite{gidney2015}. It turns out to be impossible to implement $n$-Toffoli gates only using low precision gates.

\begin{theorem}
    It is impossible to implement $n$-Toffoli gate and quantum incrementor without ancillary qubits using only low precision gates. In fact, any quantum circuit implementing unitary $U$ with determinant $-1$ without ancillary qubits must have such single-qubit gate with phase of determinant $\beta\pi$ that either $\beta$ is irrational, or $\beta=\frac{p}{q}$ for some pair of coprime integers $p,q$ and $q=\Omega(2^n)$.
\end{theorem}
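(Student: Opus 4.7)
The plan is to push everything through multiplicativity of the determinant. Suppose $U$ acts on $n$ qubits with no ancillae and is realized by a circuit containing $c$ CNOTs and single-qubit gates $g_1,\dots,g_s$ with determinants $\det g_j = e^{i\beta_j\pi}$. Embedding a $k$-qubit gate into $n$ qubits by tensoring with identity raises its determinant to the power $2^{n-k}$; a CNOT thus contributes $(-1)^{2^{n-2}} = +1$ for $n\ge 3$, while $g_j$ contributes $e^{i\beta_j\pi\cdot 2^{n-1}}$. Equating $\det U = -1 = e^{i\pi}$ distills the entire hypothesis into the single numerical condition
\[
  2^{n-1}\sum_{j=1}^{s}\beta_j \;\equiv\; 1 \pmod{2}.
\]

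If some $\beta_j$ is irrational the conclusion is immediate, so I may assume every $\beta_j = p_j/q_j$ is rational in lowest terms. Reducing $\sum_j\beta_j$ to lowest terms $M/N$, the requirement that $2^{n-1}M/N$ be an odd integer, coupled with $\gcd(M,N)=1$, forces $N \mid 2^{n-1}$; parity of the quotient $2^{n-1}/N$ then pins $N$ down to $2^{n-1}$ exactly (with $M$ odd and nonzero). Since $N \mid \operatorname{lcm}(q_1,\dots,q_s)$, the $2$-adic valuation obeys $\max_j v_2(q_j) = v_2(\operatorname{lcm}(q_j)) \ge v_2(N) = n-1$, so at least one $q_j$ exceeds $2^{n-1} = \Omega(2^n)$.

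For the two stated applications, one only needs $\det U = -1$. The $n$-Toffoli gate acts on $n+1$ qubits and realizes a single transposition of computational basis states, giving sign $-1$; applying the general bound with qubit-count $n+1$ yields $q_j \ge 2^n = \Omega(2^n)$. The incrementor on $n$ qubits is a single $2^n$-cycle with sign $(-1)^{2^n-1} = -1$, and the general bound with qubit-count $n$ gives $q_j \ge 2^{n-1} = \Omega(2^n)$. In each case this is precisely the promised exponential precision.

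The step I expect to require the most care is the parity argument pinning $N$ to $2^{n-1}$ rather than merely to a divisor of it: a priori, cancellations between many rationals $p_j/q_j$ could drive the reduced denominator $N$ far below $\operatorname{lcm}(q_j)$, but the right-hand side of the determinant identity is $2$-adically too odd at scale $n-1$ for such cancellation to reach it. The remaining pieces---the tensor-identity determinant formula for embedded gates, the sign computations for the Toffoli and the incrementor, and extracting a single large $q_j$ from a large LCM via $v_2$---are routine bookkeeping.
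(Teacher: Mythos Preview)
Your proposal is correct and follows essentially the same determinant-and-parity argument as the paper: both reduce to the congruence $2^{n-1}\sum_j\beta_j\equiv 1\pmod 2$ and then use a $2$-adic valuation count to force some denominator $q_j$ to be at least $2^{n-1}$. Your presentation is in fact slightly tighter than the paper's (you argue directly rather than by contradiction, obtain the explicit bound $q_j\ge 2^{n-1}$, and verify $\det U=-1$ for the Toffoli and incrementor separately), but the underlying idea is identical.
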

\begin{proof}
    Suppose the statement false and denote the circuit by $C$ which consists of only single-qubit and CNOT gates. Any CNOT gate has determinant $1$ as an $(n+1)$-qubit gate. If a single-qubit gate itself has determinant $\exp(i\frac{p}{q}\pi)$ where $p$ is coprime with $q$, then as an $(n+1)$-qubit gate, it has determinant $\exp(i\frac{2^np}{q}\pi)$; furthermore, according to assumption, the order of factor $2$ of $q$ is $o(n)$, so as an $(n+1)$-qubit gate, its determinant has phase $\frac{r}{s}\pi$ where $r$ and $s$ are coprime with $r$ even and $s$ odd.
    
    Say the circuit $C$ has $m$ single-qubit gates (as $(n+1)$-qubit gates) and denote their phases $\frac{r_1}{s_1}\pi,\dots,\frac{r_m}{s_m}\pi$. We prove this theorem by determinant argument. Since $C$ implements $U$ without ancillary qubit, $C$ has determinant $-1$. All CNOT gates have determinant $1$ and they do not contribute to the overall determinant, so the product of determinants of single-qubit gates in $C$ is equal to $-1$, which means
    \[\frac{r_1}{s_1}\pi+\dots+\frac{r_m}{s_m}\pi\equiv\pi \pmod{2\pi},\]
    that is,
    \[\frac{r_1}{s_1}+\dots+\frac{r_m}{s_m}\equiv1 \pmod{2}.\]
    Denote $a$ and $b$ the numerator and denominator of left hand side of the equation above, respectively. We know $b$ is odd because all of $s_1,\dots,s_m$ are odd. However, $a$ is even since all of $r_1,\dots,r_m$ are even. This leads to contradiction.
\end{proof}

Briefly speaking, what we do in this proof is arguing by determinant. Without ancillary qubits, both $n$-Toffoli gate and incrementor have determinant $-1$. However, adding just one ancillary qubit turns the determinant to $1$ and thus the whole unitary is in $\textup{SU}(2^{n+2})$. This also provides some intuition why our construction in \cref{sec:construction} and \cref{subsec:plus_one} works. In general, a unitary may have determinant $\exp(i\frac{p}{2^a}\pi)$ for some integer $p,a$, and adding $a$ ancillary qubits turns this unitary into a special unitary. This may be good news for exactly implementing the unitary using only low precision gates.

From the perspective described above, we conjecture that adding ancillary qubits helps reduce precision of gates used to implementing unitary operators. Yet, arguing by determinant only snoops into very limited properties of unitary. After all, we cannot even imagine adding ancillary qubit would help for implementing a simple $R_z(\pi/2^k)$ gate, while it is well-known that any single qubit gate can be approximated within $\epsilon$ error using only $O(\log\frac{1}{\epsilon})$ gates of $H$ and $T$ \cite{PhysRevLett.110.190502,ross2015optimal}. Nevertheless, we are eager to know whether there are other factors that prevent a unitary from being implemented by low precision gates.

\section{Discussion: ancillary qubits or extra energy levels}\label{sec:qudit}
In \cref{sec:construction} and \cref{subsec:plus_one}, we are able to implement $n$-Toffoli gate and quantum incrementor in $O(\log n)$ and $O(\log^2n)$ depth using $1$ ancillary qubit, respectively. But it is known that both of them can be simply implemented on qutrit system in the same depth using zero ancillary qutrit \cite{ISCAqutrit,baker2020improved}. Also in \cref{sec:precision}, we show that any circuit implementing these two gates without ancillary qubits must use exponential precision gates. However, the constructions on qutrit system consist of only reversible qutrit gates. Vaguely speaking, this may suggest that qutrit system is ``more adapted to reversibility'' than qubit system. After all, embedding a Boolean function into larger alphabets indeed stretches out more space for reversibility. But we are still not sure how to fully exploit this property.

On the other hand, there is still no evidence to the existence of such implementation task, that it is fundamentally difficult on qubit system while qudits help a lot. After all, on the level of quantum circuit, qubit system and qudit system can simulate each other efficiently. To be concrete, we remark the following simple fact.
\begin{fact}[folklore]\label{fact:exchange}
    Every quantum circuit $C$ on qutrit system can be simulated by some $C'$ on qubit system whose depth, size and qubit number are of the same order as $C$, and vice versa.
\end{fact}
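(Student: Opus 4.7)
The plan is to exhibit two explicit embeddings—one in each direction—and check that they preserve depth, size, and qudit count up to constant factors. Since the elementary gates of both systems act on $O(1)$ qudits and any $O(1)$-qudit unitary decomposes into $O(1)$ gates from the native set of the target system, the simulation overhead is a constant per gate.

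For the direction simulating qutrits by qubits, I would encode each qutrit into two qubits via the injection $\ket{0}\mapsto\ket{00}$, $\ket{1}\mapsto\ket{01}$, $\ket{2}\mapsto\ket{10}$, reserving $\ket{11}$ as an unused ``garbage'' subspace. A single-qutrit gate $U\in\mathrm{U}(3)$ becomes the two-qubit unitary $U\oplus 1$ acting on $\mathbb{C}^4$, and any two-qutrit gate $V\in\mathrm{U}(9)$ similarly extends to a four-qubit unitary of the form $V\oplus I_7$ on $\mathbb{C}^{16}$. Each of these extended unitaries has constant arity, so by the standard universality result for $\mathcal{B}_2$ \cite{nielsen2010quantum} it can be realized by a circuit of constant size and depth. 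Therefore a qutrit circuit of size $s$, depth $d$ on $m$ qutrits is simulated by a qubit circuit of size $O(s)$, depth $O(d)$ on $2m$ qubits, with ancillary qutrits mapped to ancillary qubit pairs in the analogous fashion.

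For the opposite direction, I would use the trivial embedding $\ket{0},\ket{1}\mapsto\ket{0},\ket{1}$ that treats each qubit as a qutrit whose $\ket{2}$ level is never populated. Any single-qubit gate $u\in\mathrm{U}(2)$ lifts to the single-qutrit gate $u\oplus 1$ on $\mathbb{C}^3$, and CNOT lifts to the two-qutrit gate that acts as CNOT on the computational qubit subspace and as identity on any basis state involving $\ket{2}$. Since qutrit systems admit universal gate sets on $O(1)$-qutrit gates, each lifted gate is again implementable in constant depth and size. The qudit count, depth, and size are preserved exactly up to constants. Ancillas translate directly since the lifted gates never leave the qubit subspace of each qutrit, so clean/dirty ancilla status is preserved.

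The only mildly delicate point—and what I would flag as the ``hard'' part, though it is still routine—is articulating why the reserved subspaces ($\ket{11}$ in one direction, $\ket{2}$ in the other) remain unpopulated throughout the simulated execution. This follows by induction on the circuit layers: at initialization the state lies entirely in the encoded subspace, and each lifted gate is block-diagonal with respect to the decomposition into encoded vs. garbage subspace (by construction: we took a direct sum with the identity). Hence the garbage subspace carries zero amplitude at every intermediate stage, so measurement outcomes and the final encoded state agree exactly with those produced by the original circuit.
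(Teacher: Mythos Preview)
Your argument is correct. The qutrit-to-qubit direction matches the paper's proof essentially verbatim: embed each qutrit into a pair of qubits, lift each qutrit gate to a constant-arity qubit unitary by direct-summing with the identity on the garbage subspace, and appeal to universality to decompose each lifted gate in constant cost.

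The qubit-to-qutrit direction differs. The paper packs three qubits into two qutrits (using $2^3\le 3^2$), whereas you embed each qubit into its own qutrit, leaving the $\ket{2}$ level idle. Both are valid and yield $O(m)$ qutrits; the paper's packing buys a better constant ($2m/3$ versus $m$ qutrits) at the price of a messier gate translation—a two-qubit gate whose qubits lie in different triples becomes a four-qutrit gate—while your one-to-one embedding keeps the gate-arity map trivial (a $k$-qubit gate always becomes a $k$-qutrit gate) and makes the invariance of the garbage subspace immediate. For the purposes of this folklore fact, your route is arguably the cleaner of the two.
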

\begin{proof}
    Given a quantum circuit $C$ on qutrit system on $n$ qutrits with depth $D(n)$ and size $S(n)$, we construct a quantum circuit $C'$ on qubit system on $2n$ qubits with depth $O(D(n))$ and $O(S(n))$. Now that a single qutrit has dimension $3$, each qutrit can be embedded into a two-qubit system which has dimension $4$. In $C'$, we let two adjacent qubits simulating one qutrit in $C$. Each single qutrit gate in $C$ corresponds to a two-qubit gate in $C'$, and each two-qutrit gate in $C$ corresponds to a four-qubit gate in $C'$. Finally, $C'$ can be decomposed into CNOT and single-qubit gates via any unitary synthesis technique (e.g. \cite{sun2023asymptotically}), resulting in the same order of circuit depth and size.

    Proof of the opposite direction is similar by noticing that three qubits can be simulated by two qutrits.
\end{proof}

From computation theory's perspective, all the circuits appear in this paper are $\textup{QNC}^1$ circuits where $\textup{QNC}^i$, roughly speaking, represents for unitary operations that have quantum circuits of $O(\log^in)$ depth with constant-bounded gate width \cite{moore1999quantum,hoyer2005quantum}. However, this characterization does not take the number of ancillary qubits used in implementing the unitary into account. Due to the lack of huge amount of qubits in the NISQ era, we consider it necessary to study what unitary can be implemented by low-depth circuits with bounded space.
\begin{definition}
    $\textup{QNC}^i(s)$ consists of reversible Boolean functions which can be implemented by quantum circuit that has depth $O(\log^in)$ and uses $O(s)$ clean ancillary qubits. $\textup{QNC}^i_3(s)$ is the analog of $\textup{QNC}^i(s)$ on qutrit system. Note that $\textup{QNC}^i_3(s)$ still computes Boolean functions.
\end{definition}
It is natural to ask whether $\textup{QNC}^i(s)$ hierarchy on $s$ collapses. The result in \cref{sec:precision} may suggest that there is indeed separation between using $0$ and $1$ ancillary qubits.
\begin{conjecture}
    $\textup{QNC}^i(0)\subsetneq\textup{QNC}^i(1)\subsetneq\textup{QNC}^i(n)$.
\end{conjecture}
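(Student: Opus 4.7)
My plan is to tackle the two strict inclusions independently, since they involve qualitatively different obstructions.

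For the lower separation $\textup{QNC}^i(0)\subsetneq\textup{QNC}^i(1)$ I would take $n$-Toffoli as the witness. Membership in $\textup{QNC}^1(1)\subseteq\textup{QNC}^i(1)$ is immediate from \cref{thm:Toffoli}, so the real content is the hardness direction $n$-Toffoli $\notin\textup{QNC}^i(0)$. My strategy is to upgrade the determinant argument of \cref{sec:precision} into a depth lower bound. After fixing a finite universal gate set such as Clifford$+T$, each single-qubit gate carries a determinant phase that is a rational multiple of $\pi$ with a fixed small denominator. Yet any exact ancilla-free implementation of $n$-Toffoli on $n+1$ qubits must still produce an overall determinant of $-1$, and the $\pmod 2$ parity argument of \cref{sec:precision} forces the circuit to contain an odd number of gates whose phase denominators have $2$-adic valuation at least $n$. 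Since the available gates contribute only constant $2$-adic valuation each, this deficit can only be made up by serial composition on a single wire. I would then try to convert that serial requirement into a depth lower bound $\omega(\log^i n)$ via a light-cone potential function that tracks accumulated $2$-adic valuation along each output qubit's history, ruling out $\textup{QNC}^i(0)$ membership.

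For the upper separation $\textup{QNC}^i(1)\subsetneq\textup{QNC}^i(n)$ I would take as the candidate witness a reversible function that bundles many independent subproblems, such as $k=\Theta(n^{1/2})$ parallel copies of the incrementor acting on disjoint registers of size $n/k$ each. The upper bound side is easy: with $\Theta(n)$ clean ancillas we can dedicate one clean ancilla per copy and run all $k$ incrementors in parallel via \cref{thm:plus_one}, landing comfortably in $\textup{QNC}^2(n)$. The lower bound side is the crux: with only a single ancilla the $k$ copies are forced to share workspace, and I would try to prove via a causality or mutual-information cut across the circuit that this sharing blocks genuine parallelism among copies and thereby forces depth $\omega(\log^i n)$.

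The main obstacle I expect is the upper separation. A naive counting argument cannot tell $\textup{QNC}^i(1)$ from $\textup{QNC}^i(n)$ apart, because both admit circuits of size $\text{poly}(n)$ and hence comparably rich function classes. The separation must therefore come from a structural bottleneck associated with restricted workspace, which is exactly the flavor of lower bound that has historically been hard for space-bounded complexity classes. The lower separation, by contrast, feels within reach because its obstruction is algebraic and number-theoretic, and such obstructions tend to survive parallelization much better than combinatorial ones.
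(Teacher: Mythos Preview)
The paper does not prove this statement: it is explicitly stated as a \emph{conjecture} and left open. There is therefore nothing to compare your proposal against; you are attempting to settle an open problem, not to reproduce an existing argument.

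On the substance of your plan, the first separation does not go through as written. The determinant argument of \cref{sec:precision} is a \emph{precision} lower bound, not a depth lower bound: over the paper's gate set $\mathcal{B}_2$ (CNOT plus arbitrary single-qubit gates), a single $Z(\pi/2^n)$ gate supplies the entire required $2$-adic valuation in depth one, so your claim that ``this deficit can only be made up by serial composition on a single wire'' is false in that model. If instead you restrict to a fixed finite set like Clifford$+T$, then for $n\ge 2$ every gate tensored up to $n{+}1$ qubits already has determinant $1$, so $n$-Toffoli is simply not exactly implementable at all with zero ancilla---no light-cone or potential-function argument is needed, but you have also silently changed the definition of $\textup{QNC}^i(s)$ away from what the paper uses. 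Either way the ``upgrade'' from precision to depth does not happen.

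For the second separation you essentially acknowledge that you have only a candidate witness and a hope that a causality or mutual-information cut will work. That is not yet an argument: space-bounded lower bounds of this flavor for quantum circuits are genuinely open, and nothing in your sketch isolates a concrete obstruction to a single ancilla being time-shared among $k$ copies in polylogarithmic depth. In short, neither half of the conjecture is resolved by the proposal, and the paper does not claim otherwise.
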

As for the relation between ancillary qubits and extra energy levels, under this notation, what we prove in this paper is that $n$-Toffoli and incrementor belongs to $\textup{QNC}^1(1)$ and $n$-Toffoli is in $\textup{QNC}^2(0)$. In general, we have $\textup{QNC}^i(s)\subseteq\textup{QNC}^i_3(s+n)$ and $\textup{QNC}^i_3(s)\subseteq\textup{QNC}^i(s+n)$ from \cref{fact:exchange}. It is a both theoretically interesting and practical question that whether extra energy levels can really help in implementing unitary operations. In particular, we have the following conjecture:
\begin{conjecture}
    $\textup{QNC}^i_3(s)\subseteq\textup{QNC}^i(s+\textup{poly}\log n)$.
\end{conjecture}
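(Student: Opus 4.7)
The plan is to improve the naive qubit simulation of \cref{fact:exchange} from $n$ extra qubits down to $\textup{poly}\log n$. Starting from a qutrit circuit $C$ of depth $O(\log^i n)$ with $s$ clean ancillary qutrits that computes a Boolean function on $n$-bit inputs, I would aim to build a simulating qubit circuit that reuses the same $n$ input qubits and needs only $s + \textup{poly}\log(n)$ additional qubits while preserving the depth order.

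The core idea is to exploit the Boolean nature of the inputs and outputs: every qutrit enters and leaves the computation in a state from $\{\ket{0}, \ket{1}\}$, so the third level $\ket{2}$ is only transiently populated. I would encode each qutrit by a single \emph{backbone} qubit (holding its $\ket{0}/\ket{1}$ component), together with a shared sparse index of $O(w \log n)$ qubits that records which qutrits are currently \emph{active} (meaning their state has support on $\ket{2}$) and what their trit value is, where $w$ is the maximum number of simultaneously active qutrits across all layers. If one can rewrite $C$ so that $w = \textup{poly}\log(n)$ without more than poly-log blowup in depth, the simulation goes through: each qutrit gate in $C$ becomes a small conditionally applied block that reads and updates the index, and within one layer these updates can be executed in parallel in low depth using the multi-controlled primitives of \cref{thm:Toffoli} together with the fan-out construction of \cref{lem:rz_fanout}. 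Linearity of the encoding handles superpositions automatically.

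The hard part, and the real obstacle, is the width-reduction step. An adversarial layer of $C$ could promote $\Omega(n)$ qutrits into $\ket{2}$ simultaneously, and naive serialization would destroy the depth bound. A Barrington-style ``borrow-and-return'' argument, in which each active qutrit is shepherded through a shared pool, typically inflates depth polynomially rather than poly-logarithmically. My fallback plan is to first establish the conjecture for circuits of a priori bounded width, reducing the task to a pure width-versus-depth question on qutrit circuits; and then attempt the width bound for the natural families of interest (such as the qutrit constructions of \cite{ISCAqutrit,baker2020improved}), arguing by induction on their recursive structure. A clean general proof, however, may well require new simulation ideas beyond this sparse-index encoding, or even a strengthening of the conjecture's hypothesis.
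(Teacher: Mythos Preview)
The statement you are addressing is a \emph{conjecture}: the paper does not prove it, does not sketch a proof, and explicitly leaves it open. There is therefore no ``paper's own proof'' to compare against; any argument you give would go beyond what the authors claim.

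Your outline is an honest research plan rather than a proof, and you identify the genuine obstruction yourself: the sparse-index encoding only works if the number of qutrits simultaneously occupying level $\ket{2}$ can be bounded by $\textup{poly}\log n$, and nothing in the definition of $\textup{QNC}^i_3(s)$ guarantees this. A single layer of two-qutrit gates can drive $\Theta(n)$ wires into $\ket{2}$ at once, and your fallback of serializing or ``borrow-and-return'' would, as you note, blow up the depth beyond $O(\log^i n)$. Restricting to bounded-width qutrit circuits proves a weaker statement than the conjecture, and the inductive argument for the specific constructions of \cite{ISCAqutrit,baker2020improved} would at best handle those examples, not the full class. So the proposal, as written, does not close the gap; it is a reasonable starting heuristic but not a proof of the conjecture.
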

Nevertheless, we believe that delving into this question may not only yield new techniques for designing efficient quantum circuits suitable for NISQ era, but also cast light on the nature of quantum resource assumption in quantum computation.

\bibliographystyle{alpha}
\bibliography{reference}
\end{document}